\newtheorem{theorem}{Theorem}[section]
\newcommand{\qbox}{\hfill\rule{2mm}{2mm}}  
\newenvironment{proof}{
\begin{trivlist}
\item[\hspace{\labelsep}{\bf\noindent Proof. }]
}{\qbox\end{trivlist}}
\begin{document}

\begin{center}
\vspace{0.1cm}
\Large
{\bf On the connection between orthant probabilities and
the first passage time problem} \\
\vspace{0.5cm}
\large
E. Di Nardo \\
\small
{\em Dipartimento di Matematica, Universit\`a degli Studi della Basilicata \\
E-mail: dinardo@unibas.it}
\end{center}
%
%
\vspace{24pt}
\begin{abstract}
This article describes a new Monte Carlo method for the evaluation of the
orthant probabilities by sampling first passage times of a
non-singular Gaussian discrete time-series across an absorbing
boundary. This procedure makes use of a simulation of several
time-series sample paths, aiming to record their first crossing
instants. Thus, the computation of the orthant probabilities
is traced back to the accurate simulation of a non-singular
Gaussian discrete-time series. Moreover, if the simulation is also
efficient, this method is shown to be more speedy than the
others proposed in the literature. As example, we make use of the
Davies-Harte algorithm in the evaluation of the orthant probabilities
associated to the ARFIMA$(0,d,0)$ model. Test results are presented
that compare this method with currently available software.
\end{abstract}
{\bf Keywords:}
First passage time, orthant probabilities, simulation of
Gaussian discrete-time series, Davies-Harte algorithm.
\section{Introduction}\label{section:1}
For many computational problems in statistics and especially in economics
(cf. for example Ku and Seneta, 1994), we have to compute the
so-called orthant probabilities (Tong, 1990) of the form
\begin{equation}
P\left( \cap_{i=1}^{k} \left\{X_i<S_i \right\}
\right) = \int_{O_k} \frac{1}{(2\pi)^{k/2} |\Sigma|^{1/2}}
\exp\left\{-\frac{1}{2} {\bf x}^T  \Sigma^{-1} {\bf x} \right\} d{\bf
x},
\label{(notation)}
\end{equation}
where the random vector ${\bf X} \equiv (X_1,X_2, \ldots, X_k)$
follows a multivariate normal distribution with zero mean and unity
variance, $\Sigma$ is a $k \times k$ symmetric positive definite correlation
matrix and
$$O_k=\{{\bf x}\equiv(x_1,x_2,\ldots,x_k) \in {\bf R}^k: \,\,
x_i<S_i, \, S_i \in {\bf R}, \,\, i=1,2,...,k\}$$
are the so-called orthant regions. Since analytical closed forms
of the multidimensional integral (\ref{(notation)}) are known only
in a few special cases, their computation leads almost always to
numerical methods.
\par
Within numerical methods, the evaluation of (\ref{(notation)})
is full of history. The expansion in tetrachoric series
was the first attempt (cf. for a review Gupta, 1963), a method quickly abandoned
due to the slow convergence. A completely different
approach is the Plackett's formula (Plackett, 1954) consisting
of an integral dimension-reduction formula based on ad hoc conditioning.
This formula is used in the the routine G01HBF of the NAG Fortran software library
(see http://www.nag.co.uk/) that evaluates (\ref{(notation)}) up to $k=10.$
Different numerical algorithms can be simply obtained via standard numerical
integrations, such as the classical Monte-Carlo method or some adaptive quadrature
formulae. For $k=2$ and $k=3$ Genz (2001, see also references
therein) sets up a competitive algorithm based on Gaussian
quadratures with adaptive integration. If $k > 3,$ the infinite
integration limits in (\ref{(notation)}) need to be carefully
handled, either by using some type of transformation into a
finite region (see Genz, 1992 and 1993 and references therein)
or by using selected cutoff values. Genz has shown that the
quadrature formulae take less time and are more accurate than
the Monte Carlo algorithm for small $k,$ whereas the Monte Carlo
method appears more efficient and has comparable accuracy
with $k$ greater than $6$ or $7.$ In the last few years, a completely
different approach has spread, based on probability
simulation. In a review paper, Hajivassiliou {\it et al.} (1996)
analyze the properties of several simulators and find that the one they propose with the
label GHK (Geweke-Hajivassiliou-Keane) works better than all other
methods by keeping a good balance between accuracy and
computational costs. Here the idea consists of estimating (\ref{(notation)})
by means of recursive conditioned probabilities involving samples of truncated
standard normal random variables (r.v.'s). When $\Sigma$ has a
tridiagonal form or sparse structures, there are different numerical methods
available based on special decompositions of $\Sigma$ (for similar matters
see Tong, 1990). When the orthant regions have bounds like $S_i=0$ for
any $i,$ the interest has been focused on expressing (\ref{(notation)})
in terms of the correlation coefficients. Some solutions are available
in closed forms for $k=2$ and $k=3$ while for $k \geq 4$ integral
representations may be obtained. For example, Sun (1988) gives
analytical decompositions of (\ref{(notation)})
into a combination of several low order integrals.
A $k$-dimensional recursion formula is found in Zhongren and
Kedem (1999) by using a polar coordinates trasformation.
\par
In short, with up to $k=10$, there are different numerical methods
available by which orthant probabilities can be robustly and
reliably computed at low to moderate accuracy levels. High accuracy
or high dimension problems can require long computation
times and it is still not clear what is the best method for this type
of problem.
\par
The aim of this paper is to explore the connection between the
orthant probabilities (\ref{(notation)}) and the first passage
time (FPT) problem for a non-singular Gaussian discrete
time-series. This connection suggests a new way to
evaluate the orthant probabilities by numerical methods.
The keystone is choosing among some simulation procedures devoted to
the construction of time-series paths.
The mathematical kernel of the FPT problem for a non-singular
Gaussian time-series is recalled in section 2
(for a more detailed analysis see Di Nardo, 2002). The Genz
algorithm and the GHK simulator are summarized in section 3
together with some classical simulation procedures
of the time-series. As a working example, in the last section
the orthant probabilities are computed in connection with the
ARFIMA$(0,d,0)$ model by means of the FPT approach. Comparisons
are also given with the results reached via the Genz algorithm
and the GHK simulator.
%
\section{The mathematical kernel}
Let us define the FPT r.v. of a discrete time-series $X_t$ through an
absorbing time dependent boundary $S_t$ as
\begin{equation}
T=\min_{t \in N} \{t: X_t \geq S_t \}.
\label{(FPT)}
\end{equation}
In the following, suppose $P(X_0=x_0)=1$ with $x_0 < S_0$ and
in order to simplify set $x_0=0.$
\par
First we observe that
\begin{equation}
P(T=k)=P(X_1<S_1,X_2<S_2,\ldots,X_{k-1}<S_{k-1},X_k \geq S_k).
\label{(densFPT)}
\end{equation}
When $X_t$ is a Gaussian time-series with zero mean, unity
variance and symmetric positive definite correlation matrix $\Sigma,$
it is
$$P(T=k)=\int_{D_k} \frac{1}{(2\pi)^{k/2} |\Sigma|^{1/2}}
\exp\left\{-\frac{1}{2}{\bf x}^T  \Sigma^{-1} {\bf x} \right\} d{\bf x} $$
where
$$D_k=\{{\bf x} \equiv (x_1,x_2,\ldots,x_k) \in {\bf R}^k: \,\,
x_i<S_i, \, i=1,2,...,k-1,\, x_k \geq S_k\}.$$
Setting ${\bf S}\equiv(S_1,S_2,\ldots,S_k)$ and $P_k({\bf S}, \Sigma) =
P\left( \cap_{i=1}^{k} \left\{X_i<S_i \right\} \right),$ equation
(\ref{(densFPT)}) could be rewritten as
$$P(T=k)=\left\{ \begin{array}{ll}
1 - P_1({\bf S}, \Sigma), & \hbox{if $k=1$}, \\
P_{k-1}({\bf S}, \Sigma) - P_k({\bf S}, \Sigma), & \hbox{if $k >
1$}
\end{array} \right.$$
by which it results
\begin{equation}
P_{k}({\bf S}, \Sigma) = 1 - P(T\leq k).
\label{(distFPT)}
\end{equation}
The above equation provides the link between the orthant probabilities and the
FPT distribution of a non singular Gaussian time-series. This suggests the use
of the machinery of the FPT problems for a speedy computing of
(\ref{(notation)}).
\par
The next result allows us to estimate the FPT distribution
function via simulation of the time-series $X_t.$
\begin{theorem}
For an upper-bounded boundary $S_t,$ the FPT r.v. (\ref{(FPT)}) is
fair, i.e. $P(T<\infty)=1.$
\end{theorem}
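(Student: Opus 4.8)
The plan is to reduce the two-sided ``stay below the boundary'' event to a one-sided crossing of a constant level, and then to show that such a crossing occurs infinitely often. Since the boundary is upper-bounded, set $M=\sup_{t}S_t<\infty$. Because
\[
\{T=\infty\}=\bigcap_{t\in N}\{X_t<S_t\}\subseteq\bigcap_{t\in N}\{X_t<M\},
\]
it suffices to prove $P\big(\bigcap_{t\in N}\{X_t<M\}\big)=0$; equivalently, that the path almost surely rises above the constant level $M$ at some time. I would phrase the target as $P(\sup_t X_t=+\infty)=1$, which by monotonicity in $M$ settles $P(T<\infty)=1$ for every upper-bounded boundary at once.

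First I would isolate the only distributional input that is uniform in $t$: each $X_t$ is standard normal, so $p:=P(X_t\ge M)=1-\Phi(M)$ is a fixed constant in $(0,1)$, independent of $t$. Writing $A_t=\{X_t\ge M\}$, we have $\sum_t P(A_t)=\sum_t p=\infty$, and the event ``the path never crosses'' is contained in $\{A_t\text{ occurs finitely often}\}$. Hence the whole statement is equivalent to $P(\limsup_t A_t)=1$, i.e. that infinitely many of the $A_t$ occur almost surely.

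If the $X_t$ were independent this is immediate from $\sum_t p=\infty$ and the second Borel--Cantelli lemma. The substance of the proof, and the step I expect to be the main obstacle, is exactly the dependence: positive correlations pull the coordinates together, and Slepian's inequality shows they can only \emph{increase} $P\big(\bigcap_{t\le n}\{X_t<M\}\big)$, so a direct comparison with the independent model bounds the survival probability from the wrong side. The way through is to exploit that the series decorrelates, $\rho_{st}=\mathrm{Corr}(X_s,X_t)\to 0$ as $|s-t|\to\infty$ (as for the ARFIMA$(0,d,0)$ model considered later): then $P(A_s\cap A_t)=P(X_s\ge M,\,X_t\ge M)\to p^2$, so the pairwise overlaps are asymptotically those of the independent case. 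I would feed this into a correlation version of Borel--Cantelli (the Kochen--Stone lemma), verifying
\[
\liminf_{n\to\infty}\frac{\big(\sum_{t\le n}P(A_t)\big)^2}{\sum_{s,t\le n}P(A_s\cap A_t)}=1,
\]
which yields $P(\limsup_t A_t)=1$ and hence $P(T<\infty)=1$.

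Alternatively, under the standing assumption that the series is stationary and ergodic (as holds for ARFIMA$(0,d,0)$, whose spectral measure is absolutely continuous), there is a cleaner route: Birkhoff's theorem gives $n^{-1}\sum_{t\le n}{\bf 1}\{X_t\ge M\}\to p>0$ almost surely, so the indicators sum to $+\infty$ and $A_t$ occurs infinitely often with no explicit estimate on $P(A_s\cap A_t)$ required. Either way, the crux is controlling the long-range dependence so that positive correlation cannot trap the path permanently below $M$; once the crossings are shown to occur infinitely often, fairness of $T$ follows at once.
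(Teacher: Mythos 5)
Your reduction to the constant level $M=\sup_t S_t$ and your reformulation of the problem as $P\bigl(\cap_{t\le k}\{X_t<M\}\bigr)\to 0$ are exactly where the paper starts, and your diagnosis of the obstacle is accurate: comparing with the independent model via Slepian bounds the survival probability from the wrong side. But the two escapes you propose (Kochen--Stone with $\rho_{st}\to 0$, or stationarity plus ergodicity) both import hypotheses that the theorem does not state: it is asserted for a general non-singular Gaussian series with upper-bounded boundary. The paper's resolution is different and needs no decorrelation. By Slepian's inequality the orthant probability only \emph{increases} if every off-diagonal correlation is replaced by $\rho_{\max}=\max_{i\ne j}\rho_{ij}$ (when $\rho_{\max}>0$; when $\rho_{\max}\le 0$ one compares with the identity and gets $\Phi^k(S_{\max})\to 0$ directly). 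The equicorrelated vector is exchangeable, i.e.\ conditionally i.i.d.\ given one common factor, which yields the explicit bound
$$P_k({\bf S},\Sigma)\le\int_{-\infty}^{\infty}\Phi^k\left(\frac{S_{\max}+\sqrt{\rho_{\max}}\,z}{\sqrt{1-\rho_{\max}}}\right)\phi(z)\,dz,$$
and this tends to $0$ as $k\to\infty$ by dominated convergence, since the integrand is a $k$-th power of a quantity strictly less than $1$. So the one-factor structure of the worst-case equicorrelated model is what replaces your appeal to vanishing correlations: even a constant positive correlation (strictly below $1$) cannot trap the path below a constant level forever.

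The difference is not merely stylistic: your argument, as written, fails on cases the theorem covers. For an equicorrelated series with fixed $\rho\in(0,1)$ one has $P(A_s\cap A_t)=q>p^2$ for all $s\ne t$, so the Kochen--Stone ratio converges to $p^2/q<1$ and gives only a lower bound on $P(\limsup A_t)$, not $1$; and the sequence, though stationary, is not ergodic (the common factor is an invariant random variable), so Birkhoff gives a random limit rather than $p$. Both of your routes therefore need an extra step (a zero--one law, or conditioning on the common factor --- which is in effect what the paper's integral representation does). A final caveat that applies to both you and the paper: some restriction on the correlations really is necessary, since if $\rho_{st}\to 1$ fast enough (e.g.\ $X_t=\sqrt{1-4^{-t}}\,Z+2^{-t}W_t$) the path can stay below $M$ forever with positive probability; the paper's proof tacitly assumes $\sup_{i\ne j}\rho_{ij}<1$, which is weaker than the decorrelation or ergodicity you assume.
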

\begin{proof}
Denote by $\rho_{ij}$ the correlation coefficients of $X_i$ and $X_j$ for
$i,j=1,2,\ldots,k.$ Set $S_{\max} = \max_{t \geq 0} S_t < \infty$ and $\rho_{max}=
\max_{i \ne j} \rho_{ij}.$ By using the Slepian inequality (Tong, 1990),
\begin{enumerate}
\item[{\it i)}] if $\rho_{max} > 0$ it is
\begin{equation}
P_k({\bf S}, \Sigma) \leq P_k({\bf S}, \Sigma_{\rho_{\max}})
\leq P_k({\bf S}_{\max}, \Sigma_{\rho_{\max}})
\label{(firstbound)}
\end{equation}
where $\Sigma_{\rho_{\max}}$ is the matrix with $1$
on the principal diagonal and $\rho_{\max}$ elsewhere and
${\bf S}_{\max}$ is a vector whose $k$ components are equal to
$S_{\max}.$ The random vector with normal distribution function
$N({\bf 0}, \Sigma_{\rho_{\max}})$ has exchangeable r.v.'s and so
(Tong, 1990)
$$P_k({\bf S}_{\max}, \Sigma_{\rho_{\max}}) = \int_{-\infty}^{\infty}
\Phi^k \left( \frac{S_{\max} + \sqrt{\rho_{\max}}
z}{\sqrt{1-\rho_{\max}}} \right) \phi(z) dz, \quad S \in {\bf R}$$
where
\begin{equation}
\phi(z) = \frac{1}{\sqrt{2 \pi}} \exp \left\{-\frac{z^2}{2} \right\},
\quad \Phi(x) =   \int_{-\infty}^ x \phi(z) dz, \quad x \in {\bf R};
\label{(distrNORM)}
\end{equation}
\item[{\it ii)}] if $\rho_{max} \leq 0$ it is
\begin{equation}
P_k({\bf S}, \Sigma) \leq P_k({\bf S}, I) \leq P_k({\bf S}_{\max}, I)
= \Phi^k(S_{\max})
\label{(secbound)}
\end{equation}
where $I$ is the identity matrix and $\Phi(x)$ is given in
(\ref{(distrNORM)}).
\end{enumerate}
From (\ref{(firstbound)}) and (\ref{(secbound)}), it is
\begin{equation}
\lim_{k \rightarrow \infty} P_k({\bf S}, \Sigma) = 0
\label{(zero)}
\end{equation}
and the result follows taking the limit in (\ref{(distFPT)})
for $k$ going to infinity.
\end{proof}
%
\section{Numerical evaluations of orthant probabilities}
%
Let us recall that the analytical results on FPT problems are mostly centered on
stochastic processes of diffusion type, where the Markov
property plays a leading role in handling the related transition
probability density function (pdf), see Ricciardi {\it et al.}
(1999). The FPT distribution also has an explicit analytical expression
for the class of stochastic processes named the Levy type
anomalous diffusion, in which the mean square displacement of
the diffusive variable $X_t$ scales with time as $t^{\gamma}$
with $0 < \gamma < 2$ (see Balakrishnan 1985, Molchan 1999,
Rangarajan and Ding 2000). Nevertheless, apart from few special cases
no closed forms of the FPT distribution are available in the
literature. Thus numerical algorithms or simulation procedures have
been resorted to in order to get more information on FPT features. In
particular, simulation procedures are the tools mainly used, as
they are especially suitable to being implemented on parallel computers (see
Di Nardo {\it et al.}, 2000).
\par
A typical simulation procedure lies in sampling $N_S$ values of the FPT
r.v. $T,$ by a suitable construction of as many time-discrete
paths, and then to record the instants when such realizations
first cross the boundary.
\par
Note that $k$ consecutive observations of a standardized
Gaussian time-series can be generated via the innovations algorithm
from $k$ i.i.d. standard normal r.v.'s with $O(k^3)$
operations (Brockwell and Davies, 1991). However, if we add
few hypotheses on the correlation function, faster methods are
available. For example the Durbin-Levinson algorithm
(Brockwell and Davies, 1991) is an efficient procedure
with $O(k^2)$ operations if $X_t$ is stationary and
$\rho_{t}$ vanishing when $t$ goes to infinity.
If in addition the discrete Fourier transform of the circulant
embedding vector
\begin{equation}
\{\rho_0,\rho_1,\ldots, \rho_{k-1},\rho_k,\rho_{k-1},\ldots,\rho_1\}
\label{(circ)}
\end{equation}
is a positive real sequence, a still faster method is the Davies-Harte
algorithm (Davies {\it et al.}, 1987) with computational cost $O(k \log k)$
(for a review on the methods generating realizations of a Gaussian stationary
process see Percival, 1992).
\par
Therefore in order to evaluate the orthant probabilities (\ref{(notation)}), we suggest
the following Monte Carlo method: choose a faster method simulating the paths of $X_t,$
record their first crossing instants trough the boundary ${\bf S}$ and then estimate
(\ref{(notation)}) trough (\ref{(distFPT)}).
\par
In the next section, a FORTRAN 77 implementation of this procedure is compared
with the Genz algorithm and the GHK simulator that appear to be the
more widespread methods computing orthant probabilities.
\par
The Genz method (Genz, 1993) transforms the original integral
(\ref{(notation)}) into an integral over unit hypercube
\begin{equation}
P_k({\bf S}, \Sigma) = e_1 \int_0^1 e_2 \int_0^1 \ldots \int_0^1
e_k \int_0^1 d {\bf w},
\label{(eqGenz)}
\end{equation}
where $d {\bf w} \equiv (d w_1, \ldots, dw_k),$
$$e_i  = \left\{ \begin{array}{ll}
\Phi\left\{\displaystyle{\frac{S_1}{c_{11}}}\right\}, & \hbox{for $i=1$} \\
\Phi \left\{ \displaystyle{\frac{S_i - \sum_{j=1}^{i-1}
c_{ij} \Phi^{(-1)}(w_j e_j)}{c_{ii}}}\right\}, &  \hbox{for $i = 2, \cdots, k,$}
\end{array} \right.$$
$c_{ij}$ are elements of the lower triangular Cholesky decomposition of
$\Sigma,$ $\Phi(x)$ is given in (\ref{(distrNORM)}) and $\Phi^{(-1)}(x)$ is its inverse.
The idea of the Genz method is to apply a standard Monte Carlo method to
(\ref{(eqGenz)}) and evaluate $\{e_i\}$ by sampling pseudo-random
numbers over $(0,1).$ Let us observe that the Cholesky
decomposition of $\Sigma$ needs $O(k^3)$ computations for a fixed $k.$
The evaluation of the integrand function in (\ref{(eqGenz)})
takes $O(k^2)$ computations for each iteration and such an evaluation is repeated
until the estimated error is less than a prefixed tolerance or
the iterations number reaches a superior limit. The method has been implemented
in FORTRAN 77 and in GAUSS, and the routines are available at
http://www.sci.wsu.edu/math/faculty/genz/homepage up to $k=100.$
\par
The GHK simulator (Hajivassiliou {\it et al.}, 1996) estimates
(\ref{(notation)}) by means of
\begin{equation}
\tilde{P}_k({\bf S}, \Sigma) = \frac{1}{N} \sum_{n=1}^N
P(A_1)P(A_2|y_{1,n})\cdots P(A_k|y_{1,n}; y_{2,n}; \ldots ;y_{k-1,n})
\label{(GHK)}
\end{equation}
where $\{y_{i,n}\}$ are drawn sequentially from independent
standard normal distributions truncated to the events $A_i$
$$A_i=\left\{ -\infty < Y_i < \frac{S_i - \sum_{j=1}^{i-1} c_{ji}Y_j}{
c_{ii}} \right\}$$
with $c_{ij}$ elements of the lower triangular Cholesky
decomposition of $\Sigma$ and $\{Y_i\}_{i \in {\bf N}}$ a sequence of i.i.d.r.v.'s
having standard normal distribution. Again the Cholesky decomposition
of $\Sigma$ takes $O(k^3)$ computations for a fixed $k,$ whereas the evaluation of
(\ref{(GHK)}) is of order $O(N \times k^2)$ for $N$ sampling.
The method has been implemented in FORTRAN 77 and in GAUSS,
and the routines are available at http://econ.lse.ac.uk/staff/vassilis/
up to $k=40.$
\section{A working example}
The Davies-Harte algorithm generates a path with $k+1$ steps of a stationary
Gaussian time-series having zero mean and autocovariances $\rho_0, \ldots, \rho_k,$
such that the finite Fourier transform of (\ref{(circ)}) is non-negative.
Craigmile (2003) has shown that the Davies-Harte algorithm is
the most efficient way to simulate an ARFIMA$(0,d,0)$ process
(Hosking, 1981). Let us recall that the standardized ARFIMA$(0,d,0)$
time-series is Gaussian, stationary with zero mean and correlation function
\begin{equation}
\rho_k=\frac{d(d+1)\cdots(d+k-1)}{(1-d)(2-d)\cdots(k-d)}, \quad k=1,2,\ldots
\label{(corrARFIMA)}
\end{equation}
if $\rho_0=1.$ To simulate a realization $X_1,X_2,\ldots,X_k$
with the Davies-Harte algorithm, the steps are:
\begin{enumerate}
\item[{\it i)}] for $n=0,1,2,\ldots, 2k-3$ evaluate
\begin{equation}
g_n = \sum_{j=0}^{k-2} \gamma(j) \exp \left( \frac{2 \pi i j n}{2 k - 2} \right)
+ \sum_{j=k-1}^{2k-3} \gamma(2 k - 2 - j) \exp \left( \frac{2 \pi i j n}{2 k - 2}
\right);
\label{(alg:1)}
\end{equation}
\item[{\it ii)}] for $j=0,1,\ldots,k-1$ evaluate
\begin{equation}
X_{j+1} = \frac{1}{2 \sqrt{k-1}} \sum_{n=0}^{2 k - 3} Z_n \sqrt{g_n} \exp \left(
\frac{2 \pi i j n}{2 k - 2}\right)
\label{(alg:2)}
\end{equation}
where $Z_0, Z_{k-1}$ are real normal r.v.'s with zero mean and variance
$2,$ $\{Z_n\}_{n=1}^{k-2}$ is a sequence of independent complex
normal r.v.'s with independent real and imaginary parts, each
of variance $1,$ and $Z_n=\overline{Z}_{2 k -2 - n}$ for $n=k, \ldots, 2 k -3.$
\end{enumerate}
Indeed, inverting equation (\ref{(alg:1)}) it is
$$\rho_j= \frac{1}{2k-2} \sum_{n=0}^{2k-3} g_n \exp\left( \frac{ 2 \pi i j n}{2k-2}
\right), j=0,1,\ldots,k-1$$
so that
\begin{eqnarray*}
Cov(X_{p+1} \, X_{q+1}) & = & E(X_{p+1} \, X_{q+1}) \\
& = & \frac{1}{4(k-1)} \sum_{n=0}^{2k-3} \sum_{l=0}^{2k-3} E(Z_n \overline{Z}_l)
\sqrt{g_n g_l} \exp \left(\frac{2 \pi i [p n - l q]}{2k-2} \right) \\
& = & \frac{1}{4(k-1)} \sum_{n=0}^{2k-3} 2 \sqrt{g_k^2} \exp \left( \frac{2 \pi i (p-q) k}{
2k-2}\right) = \rho_{p-q}.
\end{eqnarray*}
Then from (\ref{(distFPT)}), an estimation of (\ref{(notation)}) is
\begin{equation}
\tilde{P}_k({\bf S}, \Sigma) = 1 - \frac{\hbox{number of first crossings before or equal
$k$}}{N_S}
\label{(estimation)}
\end{equation}
where $N_S$ is the number of simulated paths.
\par
Observe that the step {\it i)} is
evaluated just once for any fixed $k$ with computational cost $O(k \log k),$
whereas the step {\it ii)} is repeated $N_{S}$ times each with $O(k \log k)$
evaluations. To speed up the simulation with the fast Fourier transform algorithm,
round $k$ up to the nearest power of two and truncate the simulated series at the
end.
\par
In the following, we take down some results in the evaluation of the orthant
probabilities with the Genz algorithm, the GHK simulator and the
method suggested here. Specifically, Table 1 refers to the case of
constant boundary $S(t)=1,$ $d=0.2$ and $k \geq 20$ and Table 2
refers to the case of linear boundary $S(t)=2 -0.01 \, t, $ $d=0.3$ and
$k \geq 20.$ In both cases, $\tilde{P}_k$ stands for numerical evaluations
of (\ref{(notation)}).
\par
With regard to the Genz method, the maximum number $N_{\max}$ of
integrand function evaluations has been allowed to $k \times 10^3,$ as suggested
by the same author. With this choice, the computed value $\tilde{P}_k$ has
reached the absolute accuracy $10^{-4}$ required in input, except in some cases marked
with star in the tables. To the $99\%$ confidence level, the estimated absolute
error has been near $10^{-4}$ for all $k \geq 20,$ except the cases
marked with star in the tables where it has been of order $10^{-3}.$ In Tables
$1 \div 2$, $CT_G$ represents an evaluation in seconds of the time employed
by the method in computing $\tilde{P}_k$ with such parameters.
\par
With regard to the GHK method, the number $N$ of allowed simulations
has been set equal to $N_S$ as indicated in Tables $1 \div 2.$ In order
to generate the required $N \times k$ pseudo-random uniform numbers
over $(0,1),$ it has been employed the routine G05CAF of the NAG software
library. In Tables $1 \div 2$, $CT_G$
represents an evaluation in seconds of the time employed by the
method in computing $\tilde{P}_k$ with such parameters.
\par
The method here proposed has been implemented in a
FORTRAN 77 program on a server ALPHA with operating system OSF1 5.0
and by using the NAG software library:
\begin{description}
\item[{\it i)}] the routine G05FDF generates a vector of normal
pseudo-random numbers by means of the Box and Muller method (see for example Rubinstein, 1981);
\item[{\it ii)}] the routine C06HBF computes the
discrete Fourier cosine transform of the sequence (\ref{(alg:1)});
\item[{\it iii)}] the routine C06EBF evaluates the discrete Fourier
transform of the Hermitian sequence in (\ref{(alg:2)}) with the fast
Fourier transform (see Brigham, 1974).
\end{description}
The time window size has been set equal to $2^5$ for $k=20 \div 30$
and equal to $2^6$ for $k=31 \div 40.$ In Tables $1 \div 2,$ $N_S$
is the number of simulated sample paths whereas $CT_{FPT}$
represents an evaluation in seconds of the time employed by the routine in
computing $\tilde{P}_k$ with such parameters.
\begin{table}[h]
\begin{center}
\begin{minipage}{.8\textwidth}
{\caption {\footnotesize Evaluations of the orthant probabilities $\tilde{P}_k$ for $S(t)=1$
and $d=0.2.$ $CT_G, CT_{GHK}$ and $CT_{FPT}$ represent an evaluation in seconds of the time
employed in the computation of $\tilde{P}_k$ respectively for the Genz method, the
GHK method and the FPT method. $N_S$ is the number of the simulated sample paths
in the FPT method.}}
\end{minipage} \\
\vspace{2mm}
\begin{tabular}[t]{cccccccc}
\hline
$ $ & \multicolumn{2}{c}{\small Genz method} & \multicolumn{2}{c} {\small GHK method} &
\multicolumn{3}{c}{\small FPT method} \\
\hline
$k$ & $\tilde{P}_k$ & $CT_G$ & $\tilde{P}_k$ & $CT_{GHK}$ & $\tilde{P}_k$ & $CT_{FPT}$ & $N_S$\\
\hline
20 & 0.0924  & 2.91 & 0.0927 & 0.49 & 0.0925 & 0.38 & 2000 \\
21 & 0.0835  & 7.19 & 0.0835 & 0.54 & 0.0838 & 0.41 & 2100 \\
22 & 0.0756  & 5.11 & 0.0757 & 0.58 & 0.0752 & 0.41 & 2100 \\
23 & 0.0684  & 8.07 & 0.0687 & 0.64 & 0.0686 & 0.42 & 2200 \\
24 & 0.0620  & 3.69 & 0.0622 & 0.68 & 0.0618 & 0.42 & 2200 \\
25 & 0.0563  & 8.98 & 0.0572 & 0.80 & 0.0558 & 0.46 & 2400 \\
26 & 0.0511  & 9.45 & 0.0512 & 0.93 & 0.0513 & 0.50 & 2650 \\
27 & 0.0463  & 4.28 & 0.0465 & 1.03 & 0.0460 & 0.50 & 2650 \\
28 & 0.0422  & 6.81 & 0.0422 & 1.05 & 0.0423 & 0.50 & 2650 \\
29 & 0.0383  & 7.18 & 0.0387 & 1.15 & 0.0385 & 0.52 & 2750 \\
30 & 0.0349  & 4.93 & 0.0348 & 1.25 & 0.0335 & 0.53 & 2800 \\
31 & 0.0317  & 3.33 & 0.0320 & 2.31 & 0.0318 & 1.43 & 3900 \\
32 & 0.0289  & 5.37 & 0.0291 & 2.45 & 0.0288 & 1.47 & 3950 \\
33 & 0.0264  & 5.57 & 0.0266 & 2.58 & 0.0262 & 1.48 & 4000 \\
34 & 0.0240  & 8.95 & 0.0243 & 2.76 & 0.0232 & 1.48 & 4000 \\
35 & 0.0220  & 6.03 & 0.0222 & 2.88 & 0.0215 & 1.48 & 4000 \\
36 & 0.0199  & 1.07 & 0.0202 & 3.01 & 0.0198 & 1.48 & 4000 \\
37 & 0.0182  & 6.64 & 0.0185 & 3.17 & 0.0182 & 2.30 & 6300 \\
38 & 0.0167  & 4.37 & 0.0167 & 5.39 & 0.0165 & 2.30 & 6300 \\
39 & 0.0153  & 6.95 & 0.0154 & 6.80 & 0.0155 & 2.34 & 6400 \\
40 & 0.0140  & 3.09 & 0.0140 & 6.18 & 0.0137 & 2.38 & 6500 \\
\hline
\end{tabular}
\end{center}
\end{table}

\begin{table}[h]
\begin{center}
\begin{minipage}{.8\textwidth}
{\caption {\footnotesize As in Table 1 for $S(t)=2-0.01 \, t$ and $d=0.3.$}}
\end{minipage} \\
\vspace{2mm}
\begin{tabular}[t]{clcccccc}
\hline
$ $ & \multicolumn{2}{c}{\small Genz method} & \multicolumn{2}{c} {\small GHK method} &
\multicolumn{3}{c}{\small FPT method} \\
\hline
$k$& $\tilde{P}_k$ & $CT_G$ & $\tilde{P}_k$ & $CT_{GHK}$ & $\tilde{P}_k$ & $CT_{FPT}$ & $N_S$\\
\hline
20 & 0.6661  & 25.00 & 0.6683 & 0.74 & 0.6661 & 0.62 & 3100 \\
21 & 0.6520* & 26.52 & 0.6523 & 0.93 & 0.6518 & 0.71 & 3500 \\
22 & 0.6381  & 12.43 & 0.6397 & 1.04 & 0.6381 & 0.75 & 3700 \\
23 & 0.6243  & 45.02 & 0.6247 & 1.25 & 0.6240 & 0.84 & 4200 \\
24 & 0.6107  & 9.16  & 0.6101 & 1.41 & 0.6106 & 0.90 & 4500 \\
25 & 0.5972  & 33.16 & 0.5966 & 1.52 & 0.5973 & 0.90 & 4500 \\
26 & 0.5838  & 53.18 & 0.5847 & 1.67 & 0.5832 & 0.93 & 4650 \\
27 & 0.5708  & 36.85 & 0.5711 & 1.80 & 0.5706 & 0.94 & 4700 \\
28 & 0.5578  & 25.49 & 0.5584 & 1.88 & 0.5578 & 0.94 & 4700 \\
29 & 0.5450  & 26.58 & 0.5456 & 2.00 & 0.5451 & 0.94 & 4700 \\
30 & 0.5323  & 31.18 & 0.5331 & 2.28 & 0.5324 & 0.99 & 4850 \\
31 & 0.5199  & 19.32 & 0.5207 & 2.96 & 0.5196 & 1.91 & 4950 \\
32 & 0.5075  & 22.96 & 0.5077 & 3.17 & 0.5080 & 1.96 & 5100 \\
33 & 0.4952  & 31.59 & 0.4967 & 3.47 & 0.4954 & 2.03 & 5200 \\
34 & 0.4833  & 32.85 & 0.4833 & 3.66 & 0.4847 & 2.03 & 5250 \\
35 & 0.4714  & 34.15 & 0.4725 & 4.04 & 0.4716 & 2.12 & 5500 \\
36 & 0.4596  & 35.32 & 0.4596 & 4.44 & 0.4596 & 2.23 & 5800 \\
37 & 0.4482  & 24.34 & 0.4482 & 4.70 & 0.4476 & 2.29 & 5900 \\
38 & 0.4368  & 38.15 & 0.4362 & 5.29 & 0.4357 & 2.37 & 6100 \\
39 & 0.4256  & 26.13 & 0.4249 & 5.73 & 0.4256 & 2.50 & 6400 \\
40 & 0.4146  & 40.59 & 0.4133 & 6.04 & 0.4146 & 2.52 & 6500 \\
\hline
\end{tabular}
\end{center}
\end{table}
As it would be theoretically expected, the GHK simulator and the Genz method
cost more in computational time than
the simulation procedure with the Davies-Harte algorithm.
Especially in the case of non-constant boundary, the Genz method is
time-consuming if one would reach the required accuracy.
Furthermore, the proposed Monte Carlo method is particularly suited
to be implemented on a parallel computer in order to further cut
the computational time and to improve the approximation;
this because the sample paths of the simulated time-series could be
easily generated independently of one another.

\end{document}